\documentclass[conference]{IEEEtran}

\IEEEoverridecommandlockouts                        

\usepackage{enumitem}
\usepackage{cite}
\usepackage{amsmath, amssymb}
\usepackage{xcolor}
\usepackage{algorithm}
\usepackage{algpseudocode}
\usepackage{amsthm}    
\usepackage{amsmath}
\usepackage{amssymb} 
\newtheorem{theorem}{Theorem}

\newtheorem{proposition}{Proposition}
\usepackage{tikz}
\usetikzlibrary{positioning, shapes}
\usepackage{mathtools} 
\usepackage{todonotes}

\newcommand{\R}{\mathbb{R}}

\usepackage{amsthm}  
\usepackage{amsmath}
\usepackage{amssymb}
\usepackage{xcolor}
\usepackage{graphicx} 
\usepackage[framemethod=TikZ]{mdframed}
\usepackage{mdframed}
\usepackage{algorithm}
\theoremstyle{remark}
\newtheorem{remark}{Remark}
\usepackage{algpseudocode}
\usetikzlibrary{arrows.meta,positioning}
\makeatletter
\renewcommand{\ALG@beginalgorithmic}{\setcounter{ALG@line}{0}}
\makeatother
\usepackage{mathtools}

\usepackage{booktabs}
\usepackage{siunitx}
\usepackage{makecell} 
\usepackage{booktabs,siunitx}
\usepackage{graphicx}
\usepackage[caption=false,font=footnotesize]{subfig}

\newcommand{\bN}{\mathbb{N}}
\newcommand{\E}{\mathbb{E}}

\title{\LARGE \bf
Linear Reformulation of Event-Triggered LQG Control under Unreliable Communication
}

\author{Zahra Hashemi and Dipankar Maity
 \thanks{
 This research is supported by the National Science Foundation CAREER Award 2443349.}
\thanks{
The authors are with the Department of Electrical and Computer Engineering at the University of North Carolina at Charlotte, NC, USA, 28223. (e-mails: {\tt \{zahrahashemi1, dmaity\}@charlotte.edu}).}
}

\newcommand{\bl}[1]{{\color{blue}#1}}

\begin{document}

\maketitle
\thispagestyle{empty}
\pagestyle{empty}

\begin{abstract}
We consider event-triggered linear-quadratic Gaussian (LQG) control when sensor updates are transmitted over an i.i.d.\ packet–erasure channel. Although the optimal controller in a standard LQG setup is available in closed form, choosing when to transmit remains computationally and analytically difficult because packetdrops randomize packet delivery and couple scheduling decisions with the estimation-error dynamics, making direct dynamic-programming solutions impractical. By certainty equivalence, the co-design problem becomes choosing a binary send/skip sequence that balances control performance and communication cost. We derive a closed-form expansion of the error covariance as precomputable Gramian terms scaled by a survival factor that depends only on the number of transmission attempts on each interval. This converts the problem into an unconstrained binary program that we linearize exactly via running attempt counters and a one-hot encoding, yielding a compact MILP well-suited to receding-horizon implementation. On the linearized Boeing-747 benchmark, a model predictive control (MPC) scheduler lowers cost while attempting far fewer transmissions than a one-shot baseline across channel success rates.

\end{abstract}
\begin{IEEEkeywords}
Event-Triggered Control, Mixed-Integer Linear Programming, Model Predictive Control, Networked Control Systems, Unreliable Communication.
\end{IEEEkeywords}

\section{INTRODUCTION}
Networked control systems (NCSs) connect sensors, schedulers, controllers, and actuators over shared communication links. This architecture enables modular design and large-scale deployment, yet it imposes strict limits on bandwidth, latency, and energy. Transmitting measurements at every sampling instant is often infeasible. Event-triggered control (ETC) addresses this by sending information only when it is valuable for closed-loop performance; see, e.g., \cite{heemels2012introduction,maity2019optimal,kostina2019rate,molin2009lqg,suthar2025fly,molin2012optimality,maity2021multiagent,maity2021optimalb,mamduhi2025network}.

Designing optimal ETC policies is challenging because, even without packet loss, the scheduler  and the remote controller  operate with different information sets. 
Random erasures further worsen this mismatch and randomize when estimation-error resets occur. This information asymmetry blocks classical dynamic programming and yields a high-dimensional, nonconvex stochastic problem. Deterministic trigger rules guarantee stability but are generally suboptimal \cite{rabi2012adaptive,lipsa2011remote}; value-of-information based optimization approaches directly model the trade-off between quadratic estimation/control performance and communication effort but suffer from the curse of dimensionality \cite{soleymani2021value}, and continuous-time Hamilton–Jacobi–Bellman (HJB) formulations face similar scalability limits \cite{thelander2020lqg}. In the LQG setting, certainty equivalence reduces co-design to a bilinear scheduling problem with stochastic error dynamics \cite{molin2009lqg,molin2012optimality}, yet the optimization remains difficult for these reasons.

Packet arrivals reset the estimation error to zero while misses let it propagate through the plant and noise, so the covariance recursion couples binary scheduling decisions with error matrices in a bilinear way. For a \emph{lossless} channel, our recent work \cite{hashemi2025linear} shows this problem admits an exact \emph{linear} MILP reformulation via a closed-form covariance expansion and exact linearization of binary products. Introducing packet loss breaks that linear structure via a nonlinear survival factor, motivating the reformulation developed here.

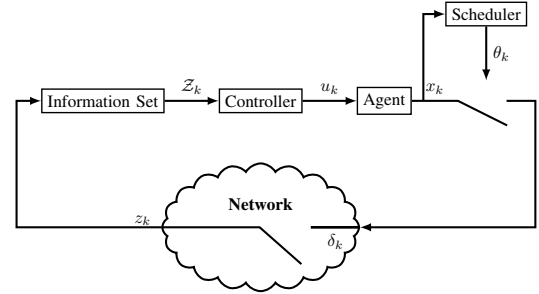
\begin{figure}[t]
\centering
\resizebox{0.8\columnwidth}{!}{
\begin{tikzpicture}[auto, node distance=1.1cm, >=latex, thick]
  \node[draw, rectangle] (Info) {Information Set};
  \node[draw, rectangle, right=of Info] (C) {Controller};
  \node[draw, rectangle, right=of C] (A) {Agent};
  \node[draw, circle, right=of A, minimum size=8mm, inner sep=0pt, color=white] (mult) {};
  \node[draw, rectangle, above=of mult] (Scheduler) {Scheduler};
  \node[draw, very thick,
        cloud, cloud puffs=20, cloud puff arc=120, cloud ignores aspect,
        minimum width=4cm, minimum height=2.6cm,
        below=1cm of C] (Network) {{\raisebox{6ex}{\textbf{Network}}}};
  \draw[->, very thick] (Info) -- node[midway,above] {$\mathcal{Z}_k$} (C);
  \draw[->, very thick] (C) -- node[midway,above] {$u_k$} (A);
  \draw[-, very thick] (A) -- node[midway,above] {$x_k$} +(1.5,0) 
            coordinate[pos=0.25] (branchX) -- +(2.5,-.5);
  \draw[->, very thick] (branchX) |- (Scheduler.west);
  \draw[->, very thick] (Scheduler.south) -- node[right] {$\theta_k$} (mult.north);
  \draw[->, very thick] (mult) -- +(1,0) |- (Network.east);
  \draw[->, very thick] (Network.west) node[left=2pt,yshift=4pt] {$z_k$}
  -| ([yshift=-1cm, xshift=-5mm]Info.south west) |- (Info.west);
  \draw[-, very thick] (Network.west) -- +(2,0) 
            coordinate[pos=0.25] (branchX) -- +(2.85,-0.75);
  \draw[-, very thick] (Network.east) -- node[midway,below] {$\delta_k$}  +(-1,0) |- (Network.east);
\end{tikzpicture}
}

\caption{\footnotesize Architecture with event scheduling and packet erasures. 
The scheduler selects $\theta_k$; the network applies the random success switch $\delta_k$. The controller receives $z_k = x_k$ if $\theta_k \delta_k = 1$, and $z_k = \varnothing$ otherwise. The control input $u_k$ is computed from the information set $\mathcal{Z}_k$. }
\label{fig:system-configuration}
\end{figure}

We study a networked control system in which a sensor measures the process state, a scheduler observes it and issues a binary send/skip decision \(\theta_k\), and a remote controller receives updates over an unreliable channel (see Fig.~\ref{fig:system-configuration}). The channel is modeled as an i.i.d.\ erasure channel with success indicator \(\delta_k\in\{0,1\}\); a fresh measurement is delivered iff an attempt is made and the transmission succeeds, i.e., when \(\theta_k\delta_k=1\). Thus uncertainty stems from both process noise and random packet loss. Our objective is to quantify and optimize the trade-off between control performance and communication effort in this setting. Communication effort is quantified by the expected count of attempts (not necessarily all of them are successful); every attempted transmission incurs the same penalty.

In this work, we develop a scheduler-centric framework for event-triggered LQG over an unreliable i.i.d.\ erasure channel. Using certainty equivalence, we reduce co-design to minimizing a quadratic control cost plus a communication penalty over binary send/skip decisions, where performance depends on whether an attempted packet is actually received. We express the error covariance in closed form as precomputable Gramian terms multiplied by a survival factor determined by the number of transmissions in each interval, yielding an unconstrained binary (nonlinear) optimization; introducing running counters and one-hot encodings converts this into a compact mixed-integer linear program (MILP). We further show that heterogeneous penalties for successful versus failed attempts collapse to a single effective charge without changing the formulation, and we derive one-step send/skip certificates for erasure channels that reduce solving the MILP. 
In addition, we establish schedule-dependent upper and lower ratio bounds of the optimal erasure-channel performance relative to the lossless benchmark, providing explicit, computable guarantees and quantifying the performance impact of packet loss.

 The linear reformulation yields fast computation times and fits with a MPC scheme. On a linearized Boeing–747 benchmark, the resulting MPC policy achieves lower closed-loop cost with significantly less transmission attempts  across a range of success probability.

The remainder of the paper is organized as follows. Section~\ref{sec:problem} introduces the problem setup and modeling assumptions. Section~\ref{sec:control} reviews the certainty-equivalent controller. Section~\ref{sec:communication} develops the event-triggered scheduling policy and its linear-programming reformulation. Section~\ref{sec:case study} reports numerical results, and section~\ref{sec:conclusion} concludes.

\textit{Notation.} $\mathbb{R}$ and $\mathbb{N}_0$ denote the set of reals and the nonnegative integers.
We define \(\|x\|_{Q}^{2}:=x^{\top}Qx\) for $x$ and $Q$ of compatible dimensions.
The trace and transpose of a matrix is denoted by \(\operatorname{tr}(\cdot)\) and ${}^\top$, respectively.
We use the notation $M \succeq 0$ (or $\succ 0$) to denote $M$ to be a positive semi-definite (or, positive definite) matrix.
Unless otherwise stated, expectations \(\mathbb{E}[\cdot]\) are taken with respect to the initial state, process noise and, the channel randomness.

\section{Problem Formulation}\label{sec:problem}
We consider the discrete–time linear system
\begin{equation}
    x_{k+1} = A x_k + B u_k + w_k,\qquad k\in\mathbb{N}_0,
    \label{eq:discrete-system}
\end{equation}
where $x_k\in\mathbb{R}^n$ is the state, $u_k\in\mathbb{R}^m$ is the control input, and
$w_k\in\mathbb{R}^n$ is the process disturbance; $A\in\mathbb{R}^{n\times n}$ and
$B\in\mathbb{R}^{n\times m}$ are known matrices. The disturbance sequence $\{w_k\}_{k\in \bN_0}$ is i.i.d., zero mean, with covariance $\Sigma^w$. The initial state $x_0$ has finite mean and covariance and is independent of the disturbance sequence.

At each time $k$, the sensor observes $x_k$ perfectly and a scheduler chooses
$\theta_k\in\{0,1\}$: $\theta_k=1$ attempts a transmission and $\theta_k=0$ skips it.
The channel is unreliable, so an attempt may fail. Let $\delta_k\in\{0,1\}$ denote
the (random) success indicator. Conditional on an attempt, $\Pr(\delta_k=1\mid\theta_k=1)=p$
and $\Pr(\delta_k=0\mid\theta_k=1)=1-p$, with $\delta_k$ i.i.d.\ across $k$ and
independent of $(x_0,\{w_k\}_{k\in \bN_0})$. When $\theta_k=0$, no transmission is made and $\delta_k$
is irrelevant. See Fig.~\ref{fig:iid-erasure} for a schematic of the i.i.d.\ erasure model.
The controller’s received measurement is
\begin{equation}
z_k =
\begin{cases}
x_k, & \text{if   }~~~\theta_k\,\delta_k=1,\\[3pt]
\varnothing, & \text{otherwise},
\end{cases}
\label{eq:channel}
\end{equation}
so at each step the controller either receives the current state or an erasure.  

We formalize the controller’s knowledge at time $k$ by its information set
\begin{align} \label{eq:infoSet}
\mathcal{Z}_k := \{z_{0:k},\,u_{0:k-1}\},\quad k\in 
\bN,\qquad \mathcal{Z}_0 := \{z_0\},
\end{align}
with the recursive update
\[
\mathcal{Z}_{k+1} = \mathcal{Z}_k \cup \{z_{k+1},\,u_k\}.
\]
The controller does not observe $\theta_k$ or $\delta_k$ directly; their influence is only through the realized $z_k$. If acknowledgments (ACK/NACK) were available, they could be appended to $\mathcal{Z}_k$, but our baseline model uses the definition above.

We denote the entire control input by $U = \{u_k\}_{k=0}^{T-1}$  and the scheduling sequence by $\Theta=\{\theta_k\}_{k=0}^{T-1}$; the stage actions are $u_k$ and $\theta_k$. The goal is to jointly determine $U$ and $\Theta$ that minimize the expected finite-horizon cost
\begin{equation}
J(U,\Theta)\!=\! \mathbb{E}\!\left[\!\sum_{k=0}^{T-1} 
\!\!\big(\|x_k\|_Q^2 + \|u_k\|_R^2 + \lambda \theta_k\big)\! + \|x_T\|_{Q_T}^2\right]\!\!.
\label{eq:cost}
\end{equation}
The matrices $Q \succeq 0$ and $Q_T \succeq 0$ penalize state deviations, $R \succ 0$ penalizes control effort, and $\lambda > 0$ represents the communication cost.

In this formulation, the communication penalty is associated with the scheduling decision $\theta_k$ 
rather than whether the communication was successful (i.e.,  $\delta_k \theta_k$) 
since network resources are allocated whenever a transmission is attempted, regardless of its success. 
If one wishes to distinguish between the costs of successful and failed transmissions, 
the model can be extended accordingly, as detailed in Remark~\ref{rem:heterogeneous}.
\begin{figure}[t]
\centering
\begin{tikzpicture}[
    scale=0.9, every node/.style={transform shape},
    >=Stealth,
    state/.style={circle,draw,semithick,minimum size=10mm,inner sep=0pt,font=\bfseries\small},
    every loop/.style={looseness=8}
]
\node[state,fill=red!40!white] (z) {0};
\node[state,fill=green!60!white,right=2.2cm of z] (o) {1};
\path[->] (z) edge[loop above] node {$1-p$} (z);
\path[->] (o) edge[loop above] node {$p$} (o);
\path[->] (z) edge[bend left=18] node[above] {$p$} (o);
\path[->] (o) edge[bend left=18] node[below] {$1-p$} (z);
\end{tikzpicture}
\caption{\footnotesize Packet–erasure channel: state 1 = success, 0 = erasure. For each time-step $k$, packer-drop is modeled by an i.i.d. Bernoulli random variable $\delta_k$ with success probability $p$.}
\label{fig:iid-erasure} \vspace{-2mm}
\end{figure}
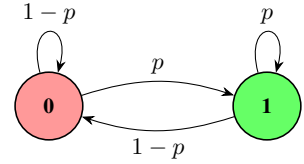

\section{Optimal Controller} \label{sec:control}
The optimal control law turns out to be a linear controller of the type of certainty equivalence (independent of the scheduler), while the scheduler remains to be designed; thus, the joint problem reduces to selecting a binary send/skip policy that shapes the stochastic estimation error \cite{molin2009lqg,molin2012optimality}. For a finite horizon \(T\), the optimal controller is given by
\begin{equation}
u_k = -L_k\,\mathbb{E}[x_k \mid \mathcal{Z}_k], \qquad k=0,\ldots,T-1,
\label{eq:molin-control}
\end{equation}
where $\mathcal{Z}_k$ is the information available to the controller (see \eqref{eq:infoSet}). The gains $\{L_k\}_{k=0}^{T-1}$ are computed by a Riccati recursion:
\begin{subequations}
\begin{align}
L_k &= S_k^{-1} B^\top P_{k+1} A, \\
S_k &= R + B^\top P_{k+1} B, \\
P_k &= A^\top P_{k+1} A + Q - A^\top P_{k+1} B S_k^{-1} B^\top P_{k+1} A,
\end{align}
\end{subequations}
with terminal condition $P_T = Q_T$.

For convenience, define the effective reception indicator $\tilde{\theta}_k := \theta_k \delta_k$, which equals~$1$ when an attempted transmission succeeds. The conditional state estimate evolves as follows~\cite{molin2012optimality}:
\[
\hat{x}_k := \mathbb{E}[x_k \mid \mathcal{Z}_k]
=
\begin{cases}
x_k, & \tilde{\theta}_k = 1,\\[4pt]
A\hat{x}_{k-1} + B u_{k-1}, & \tilde{\theta}_k = 0.
\end{cases}
\]

Substituting~\eqref{eq:molin-control} into the cost functional reduces the joint problem to an equivalent optimization over the transmission sequence $\Theta$:
\begin{equation}
J(U^*,\Theta)
= J_{\text{const}}
+ \mathbb{E}\!\Bigg[\sum_{k=0}^{T-1} \|e_k\|_{L_k^\top S_k L_k}^2\Bigg]
+ \lambda\,\mathbb{E}\!\Bigg[\sum_{k=0}^{T-1}\theta_k\Bigg],
\label{eq:cost-function2}
\end{equation}
where $e_k := x_k - \hat{x}_k$ is the estimation error, and
\[
J_{\text{const}}
:= \mathbb{E}[x_0^\top P_0 x_0]
+ \sum_{k=0}^{T-1} \mathbb{E}[w_k^\top P_{k+1} w_k],
\]
which depends only on the initial state and the noise sequence, and is therefore independent of~$\Theta$. Finally, the estimation-error sequence obeys the recursion
\begin{equation}
e_{k+1} = \bigl(1-\tilde{\theta}_{k+1}\bigr)\bigl(A e_k + w_k\bigr), 
\qquad e_{-1} := 0.
\label{eq:error-dynamics}
\end{equation}
Thus, whenever a new packet is successfully delivered ($\tilde{\theta}_{k+1}=1$), the error resets to zero; otherwise, it propagates under the open-loop system dynamics with additive process~noise.

\section{Communication Protocol}\label{sec:communication}
We next formulate the scheduling task in receding–horizon form. 
Define the scheduler–side error
\begin{align*}
    e^s_k &:= x_k - A \hat{x}_{k-1} - B u_{k-1} = A e_{k-1} + w_{k-1},\\
    e^s_0 &:= x_0 - \mathbb{E}[x_0].
\end{align*}
The controller’s estimation error satisfies
\[
e_k = \bigl(1-\tilde{\theta}_{k}\bigr)\,e_k^s, \qquad \,k\in \bN_0.
\]

With weights $\Gamma_t := L_t^\top S_t L_t$, the scheduler at time $k$ plans the
send/skip decisions over the remaining horizon by solving:
\begin{align}
\label{eq:MPC_optt}
\min_{\{\theta_t\}_{t=k}^{T-1}}\quad 
& \mathbb{E}\!\left[\sum_{t=k}^{T-1}\|e_t\|_{\Gamma_t}^2 + \lambda\,\theta_t\right] \nonumber\\
\text{s.t.}\quad 
& e_{t+1} = \bigl(1-\tilde{\theta}_{t+1}\bigr)\bigl(A e_t + w_t\bigr), \\
& e_k = \bigl(1-\tilde{\theta}_{k}\bigr)\,e_k^s. \nonumber
\end{align}
Let
\[
\Theta_k^*(e_k^s)
:= \{\theta_{k|k}^*,\,\theta_{k+1|k}^*,\,\ldots,\,\theta_{T-1|k}^*\}
\]
denote an optimal schedule computed at time $k$ for the current error $e_k^s$.
A receding–horizon implementation applies only the first decision, $\theta_{k|k}^*$,
computes the scheduler-side error $e_{k+1}^s$, and re-solves \eqref{eq:MPC_optt} at the next step.

To design a computationally tractable procedure for solving the stochastic optimization problem~\eqref{eq:MPC_optt}, 
we proceed by evaluating its objective under a given sequence of scheduling decisions. 
Let $\Theta_k = \{\theta_{k|k}, \ldots, \theta_{T-1|k}\}$ denote a fixed decision rollout. 
We define $\bar{J}(\Theta_k, k \mid e^s_k)$ as the corresponding cost associated with~\eqref{eq:MPC_optt}, namely,
\begin{align}
    \bar{J}(\Theta_k, k \mid e^s_k)
    := &\E\left[\sum_{t=k}^{T-1} \|e_t\|^2_{\Gamma_t} + \lambda \theta_t~\bigg|~\Theta, e^s_k\right] \nonumber \\
    =&\sum_{t=k}^{T-1} \big(\operatorname{tr}(\Gamma_t \Sigma_{t|k}) + \lambda \theta_{t|k}\big),
    \label{eq:MPC_cost1}
\end{align}
where $\Sigma_{t|k} = \mathbb{E}[e_t e_t^\top \mid \Theta_k, e^s_k]$ for $t = k, \ldots, T-1$. 
\begin{align*}
&\Sigma_{t+1|k}
= \mathbb{E}[e_{t+1} e_{t+1}^\top~|~\Theta_k, e^s_k] \\[4pt]
&= \mathbb{E}\!\left[(1 - \tilde{\theta}_{t+1|k})^2 (A e_t + w_t)(A e_t + w_t)^\top~|~\Theta_k, e^s_k\right] \\[4pt]
&\overset{(\dagger)}{=}(1 - {\theta}_{t+1|k}\mathbb{E}[\delta_{t+1}])
   \big(A\,\mathbb{E}[e_t e_t^\top~|~\Theta_k, e^s_k]A^\top + \mathbb{E}[w_t w_t^\top]\big) \\[4pt]
&=(1 - p\,\theta_{t+1|k})\!\left(A \Sigma_{t|k} A^\top + \Sigma^w\right) %\\[4pt]
%&\overset{(\S)}{=}(1 - p)^{\theta_{t+1|k}}\!\left(A \Sigma_t A^\top + \Sigma^w\right).
\end{align*}
Here, $(\dagger)$ follows from the facts that (i) $(1 - \tilde{\theta}_{t+1|k})^2 = 1 - \tilde{\theta}_{t+1|k}$ and ${\delta}_{t+1}$ is independent of the randomness of $e_t, w_t$, and (ii) $w_t$ has zero mean and is independent of $e_t$.

For $t=k$, $e_k = \bigl(1-\tilde{\theta}_{k}\bigr)\,e_k^s$, and therefore
\[
\Sigma_{k|k} = \mathbb{E}[e_k e_k^\top \mid \Theta_k, e^s_k] = (1 - p{\theta_{k|k}})\, e_k^s (e_k^s)^\top.
\]
Hence, the original stochastic optimization problem can be equivalently expressed as a deterministic \emph{mixed-integer nonlinear program (MINLP)} characterized by bilinear matrix equality constraints.

\begin{proposition}[Matrix–recursion formulation of optimal scheduling]
\label{prop:matrix-form}
For any initial innovation $e_k^s$, the optimal scheduling problem over the horizon $[k,T)$ is equivalent to the deterministic program
\begin{align}
\label{eq:MPC_2}
\min_{\Theta,\,\Sigma} \quad & \bar{J}(\Theta, k \mid e_k^s) \nonumber \\
\text{\normalfont s.t.} \quad
& \Sigma_{t+1\mid k} = \bigl(1 - p\,\theta_{t+1\mid k}\bigr)\!\left(A \Sigma_{t\mid k} A^\top + \Sigma^w\right), \nonumber \\[2pt]
& \hspace{5.2em} \forall\, t = k,\ldots,T-2, \nonumber \\
& \Sigma_{k\mid k} = \bigl(1 - p\,\theta_{k\mid k}\bigr)\, e_k^{s}(e_k^{s})^\top .
\end{align}
\end{proposition}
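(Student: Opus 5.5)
The argument is to fix an arbitrary open-loop rollout $\Theta_k=\{\theta_{k|k},\ldots,\theta_{T-1|k}\}\in\{0,1\}^{T-k}$, show that the stochastic objective of \eqref{eq:MPC_optt} evaluated at $\Theta_k$ equals $\bar J(\Theta_k,k\mid e_k^s)$ computed from the matrix sequence generated by the constraints of \eqref{eq:MPC_2}, and then conclude that the two minimizations range over the same finite set and have the same value for every element. The feasible set of \eqref{eq:MPC_2} in $\Sigma$ is a singleton for each $\Theta$, since the equalities determine $\Sigma_{k|k},\ldots,\Sigma_{T-1|k}$ uniquely by forward recursion. Therefore minimizing jointly over $(\Theta,\Sigma)$ is the same as minimizing over $\Theta$ alone the function $\Theta\mapsto\bar J(\Theta,k\mid e_k^s)$ with $\Sigma$ eliminated.

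The first step is the trace identity in \eqref{eq:MPC_cost1}. By linearity of expectation, $\mathbb{E}[\|e_t\|^2_{\Gamma_t}\mid\Theta_k,e_k^s]=\operatorname{tr}(\Gamma_t\,\mathbb{E}[e_te_t^\top\mid\Theta_k,e_k^s])=\operatorname{tr}(\Gamma_t\Sigma_{t|k})$. The communication term is deterministic because $\theta_{t|k}$ is fixed.

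The second step shows by induction on $t$ that the true conditional second moments satisfy exactly the constraints. For the base case, $e_k=(1-\delta_k\theta_{k|k})e_k^s$ with $e_k^s$ given, so $\Sigma_{k|k}=\mathbb{E}[(1-\delta_k\theta_{k|k})^2]\,e_k^s(e_k^s)^\top=(1-p\theta_{k|k})e_k^s(e_k^s)^\top$. This uses $(1-\tilde\theta)^2=1-\tilde\theta$ because the quantity is binary. For the inductive step, use \eqref{eq:error-dynamics} together with the computation preceding the proposition. The facts needed are:
\begin{itemize}
\item $\delta_{t+1}$ is independent of $(e_t,w_t)$, since $e_t$ depends only on $e_k^s$, $w_{k:t-1}$ and $\delta_{k:t}$, and the $\delta$'s are i.i.d.\ and independent of the noise;
\item $w_t$ is zero-mean and independent of $e_t$, so the cross terms vanish;
\item the covariance of $w_t$ is $\Sigma^w$.
\end{itemize}
Together these give $\Sigma_{t+1|k}=(1-p\theta_{t+1|k})(A\Sigma_{t|k}A^\top+\Sigma^w)$.

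The main point needing care is the class of policies. Problem \eqref{eq:MPC_optt} is stated as a minimization over the sequence $\{\theta_t\}$, so I would interpret it, as the paper does when defining $\bar J$, as a minimization over deterministic rollouts planned at time $k$ given $e_k^s$. The scheduler does not observe $\delta$, and future $w$'s are unknown at planning time, so conditioning only on $(\Theta_k,e_k^s)$ is legitimate. Under this interpretation, equality of objectives holds for every rollout. Hence the optimal values coincide, and the minimizers correspond one-to-one via $\Theta\mapsto(\Theta,\Sigma(\Theta))$. I would state this interpretation explicitly to avoid confusion with closed-loop (state-feedback) scheduling, for which the equivalence would only apply to the first planned decision in the receding-horizon scheme.
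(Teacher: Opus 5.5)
Your proposal is correct and is essentially the paper's own argument. The paper proves the proposition through the derivation immediately preceding it, using the same ingredients you use:
\begin{itemize}
\item the trace identity for the stage cost;
\item the base case via $(1-\tilde\theta_k)^2=1-\tilde\theta_k$;
\item an inductive step resting on the independence of $\delta_{t+1}$ from $(e_t,w_t)$, and on $w_t$ being zero-mean and independent of $e_t$.
\end{itemize}
Your remarks that $\Sigma$ is uniquely determined by $\Theta$, and that the minimization is over open-loop rollouts, only make explicit what the paper leaves implicit.

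One side remark needs correcting. In Algorithm~\ref{alg:et-lqg} the scheduler does receive ACK/NACK, so it does observe past $\delta$'s. The right justification for conditioning only on $(\Theta_k,e_k^s)$ is that $\delta_k,\ldots,\delta_{T-1}$ and $w_k,\ldots,w_{T-1}$ are not yet realized at planning time and are independent of $e_k^s$.
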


Despite the linearity of the objective with respect to $\{\Sigma_{t\mid k}\}_{t=0}^{T-1}$ and $\Theta_k$, the bilinear coupling in the matrix equalities considerably complicates the optimization. This coupling reveals that packet losses directly increase estimation uncertainty, which in turn affects \emph{when} and \emph{how often} transmissions are triggered. The number of decision variables grows with the horizon length and state dimension, posing scalability challenges; nevertheless, the problem structure is well suited to the mixed-integer optimization approach presented next.

To address these challenges, we derive a closed-form expression for the covariance, eliminating the recursive dependence and reformulating the problem as an unconstrained binary optimization program.

\begin{proposition}[Closed-Form Covariance and Unconstrained Scheduling Program]
\label{thm:closed-form}
Let $p\in (0,1)$. For a horizon starting at time~$k$, the estimation-error covariance admits the closed-form decomposition 
\begin{equation}
\Sigma_{t|k}
= \sum_{\tau=k}^t 
   \big( (1 - p)^{\sum_{s=\tau}^t \theta_{s|k}}\big)
   G_{t,\tau},
\label{eq:Sigma-sum}
\end{equation}
where 
\begin{equation}
G_{t,\tau} =
\begin{cases}
A^{t-k}\,e_k^s(e_k^s)^\top (A^{t-k})^\top, & \tau = k, \\[4pt]
A^{t-\tau}\Sigma^w(A^{t-\tau})^\top, & \tau \ge k+1.
\end{cases}
\end{equation}
Define
\begin{equation}
g_{t,\tau} := \operatorname{tr}(\Gamma_t G_{t,\tau}).
\end{equation}
Consequently, the finite-horizon scheduling problem reduces to the unconstrained binary program
\begin{equation}
\min_{\Theta_k}
\;\sum_{t=k}^{T-1} \sum_{\tau=k}^t 
   (1 - p)^{\sum_{s=\tau}^t \theta_{s|k}} g_{t,\tau}
   + \lambda \sum_{t=k}^{T-1} \theta_{t|k}.
\label{eq:unconstrained}
\end{equation}
\end{proposition}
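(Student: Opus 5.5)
The plan is to prove the closed form \eqref{eq:Sigma-sum} by induction on $t\ge k$, starting from the recursion of Proposition~\ref{prop:matrix-form}. The key observation is that since $\theta_{s|k}\in\{0,1\}$, the scalar factor in that recursion can be written as a power:
\[
1-p\,\theta_{s|k} = (1-p)^{\theta_{s|k}}.
\]
This holds because both sides equal $1$ when $\theta_{s|k}=0$ and both equal $1-p$ when $\theta_{s|k}=1$. With this identity, products of the factors $(1-p\theta_{s|k})$ over a range of $s$ collapse into $(1-p)$ raised to the corresponding partial sum of the $\theta$'s. That is exactly the survival factor appearing in \eqref{eq:Sigma-sum}.

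\textbf{Base case.} For $t=k$ the sum in \eqref{eq:Sigma-sum} has only the term $\tau=k$. It equals $(1-p)^{\theta_{k|k}}\,e_k^s(e_k^s)^\top$, which matches $\Sigma_{k|k}=(1-p\theta_{k|k})e_k^s(e_k^s)^\top$ by the identity above.

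\textbf{Inductive step.} Assume \eqref{eq:Sigma-sum} holds at time $t$. Substituting into $\Sigma_{t+1|k}=(1-p)^{\theta_{t+1|k}}(A\Sigma_{t|k}A^\top+\Sigma^w)$ produces two groups of terms.
\begin{itemize}
\item \emph{Propagated terms.} For each $\tau\le t$, conjugation by $A$ gives $A G_{t,\tau}A^\top = G_{t+1,\tau}$. This holds for both the $\tau=k$ branch, via $A\cdot A^{t-k}=A^{t+1-k}$, and the $\tau\ge k+1$ branch. The extra factor $(1-p)^{\theta_{t+1|k}}$ extends the exponent from $\sum_{s=\tau}^{t}\theta_{s|k}$ to $\sum_{s=\tau}^{t+1}\theta_{s|k}$.
\item \emph{New term.} The noise contribution $(1-p)^{\theta_{t+1|k}}\Sigma^w$ is exactly the $\tau=t+1$ term, since $G_{t+1,t+1}=A^0\Sigma^w(A^0)^\top=\Sigma^w$ and its exponent is $\sum_{s=t+1}^{t+1}\theta_{s|k}$.
\end{itemize}
Collecting both groups yields \eqref{eq:Sigma-sum} at $t+1$. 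The step needing most care is the index bookkeeping here, in particular confirming that the $\tau=k$ initial-error term and the $\tau\ge k+1$ noise terms propagate uniformly under $A(\cdot)A^\top$. The interpretation is that noise injected at time $\tau-1$ survives to $t$ only if every attempt in $[\tau,t]$ failed.

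\textbf{Reduction to the unconstrained program.} Insert \eqref{eq:Sigma-sum} into \eqref{eq:MPC_cost1}. By linearity of the trace,
\[
\operatorname{tr}(\Gamma_t\Sigma_{t|k})=\sum_{\tau=k}^t (1-p)^{\sum_{s=\tau}^t\theta_{s|k}}\,g_{t,\tau}.
\]
Summing over $t$ and adding $\lambda\sum_t\theta_{t|k}$ gives the objective in \eqref{eq:unconstrained}. The equality constraints of Proposition~\ref{prop:matrix-form} uniquely determine $\Sigma$ from $\Theta_k$, and we have just solved them explicitly. Eliminating $\Sigma$ therefore leaves a problem over $\Theta_k\in\{0,1\}^{T-k}$ alone, with no remaining constraints. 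By Proposition~\ref{prop:matrix-form}, this program is equivalent to \eqref{eq:MPC_optt}. The hypothesis $p\in(0,1)$ is used only so that $(1-p)^{\theta}$ is well defined with base $1-p>0$. For $p=1$ the same argument works under the convention $0^0=1$.
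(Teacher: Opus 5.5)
Your proof is correct and matches the paper's argument: induction on $t$ from the recursion of Proposition~\ref{prop:matrix-form}, using $1-p\,\theta=(1-p)^{\theta}$ for binary $\theta$, then linearity of the trace to obtain the stage costs $g_{t,\tau}$. Your explicit checks that $A G_{t,\tau} A^\top = G_{t+1,\tau}$ holds for both branches and that the new noise term is the $\tau=t+1$ summand fill in bookkeeping the paper compresses into a single displayed equality.
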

\begin{proof}
We proceed by induction on $t$.  
For $t = k$, \eqref{eq:MPC_2} can be written as 
\[
\Sigma_{k|k} = (1 - p\,\theta_k)\, e_k^s (e_k^s)^\top = (1 - p)^{\theta_k}\, e_k^s (e_k^s)^\top,
\]
Assume the representation holds at time $t$, that is,
\[
\Sigma_{t|k} = \sum_{\tau = k}^t 
  \big( (1 - p)^{\sum_{s=\tau}^t \theta_{s|k}}\big) G_{t,\tau}.
\]
Substituting this expression into~\eqref{eq:MPC_2} yields 
{\small
\begin{align*}
\Sigma_{t+1|k}
&= (1 - p {\theta_{t+1|k}}\,)\!\left(A \Sigma_{t|k} A^\top + \Sigma^w\right) \\[3pt]
&\overset{(*)}{=}(1 - p \,)^{{\theta_{t+1|k}}}\!\left(A \Sigma_{t|k} A^\top + \Sigma^w\right)  \\[3pt]
&= \Bigg( A \Big(\sum_{\tau = k}^t 
      \big( (1 - p \,)^{\theta_{t+1|k}}(1 - p)^{\sum_{s=\tau}^t \theta_{s|k}}\big) G_{t,\tau}\Big) A^\top  \\[3pt]
  & \qquad + (1 - p \,)^{\theta_{t+1|k}}\Sigma^w \Bigg) =\sum_{\tau = k}^{t+1} 
     \big( (1 - p)^{\sum_{s=\tau}^{t+1} \theta_{s|k}}\big)
    G_{t+1,\tau}.
\end{align*}
}Step $(*)$ holds because $(1-p\,\theta_{t+1|k})$ equals $1$ when $\theta_{t+1|k}=0$ and equals $(1-p)$ when $\theta_{t+1|k}=1$. This completes the induction. The instantaneous stage cost (see \eqref{eq:MPC_cost1}) therefore simplifies to
\[
\operatorname{tr}(\Gamma_t \Sigma_{t|k})
  = \sum_{\tau = k}^t
  (1 - p)^{\sum_{s=\tau}^t \theta_{s|k}} 
      g_{t,\tau},
\]
where $g_{t,\tau}=\operatorname{tr}(\Gamma_t G_{t,\tau})$ can be precomputed offline.  
Substituting this expression into \eqref{eq:MPC_cost1} completes the proof.
\end{proof}

The only nonlinearity in \eqref{eq:unconstrained} is the term $ (1 - p)^{\sum_{s=\tau}^t \theta_{s|k}}$. Let \(c_{t,\tau} := \sum_{s=\tau}^t \theta_{s|k}\) denote the number of transmission attempts on \([\tau,t]\). 
To obtain a mixed–integer \textit{linear formulation}, we introduce running counters \(c_{t,k}\in\mathbb{Z}_{\ge0}\) satisfying
\[
c_{t,k} = c_{t-1,k} + \theta_{t|k},\qquad c_{k-1,k}=0,
\]
so that \(c_{t,k}=\sum_{s=k}^t \theta_{s|k}\) and \(0\le c_{t,k}\le t-k+1\).
Interval counts then follow as
\[
c_{t,\tau} = c_{t,k} - c_{\tau-1,k},\qquad k\le \tau \le t \le T-1,
\]
with the bounds \(0\le c_{t,\tau}\le t-\tau+1\).
Each \(c_{t,\tau}\) is encoded by one-hot binary selectors \(s_{t,\tau,i}\in\{0,1\}\) for \(i=0,\dots,t-\tau\), enforced by
\[
\sum_{i=0}^{t-\tau} s_{t,\tau,i}=1,\qquad \sum_{i=0}^{t-\tau} i\,s_{t,\tau,i}=c_{t,\tau}.
\]
Let \(\beta_i:=(1-p)^i\) (precomputed constants depending only on \(p\)).
Then
\[
(1-p)^{c_{t,\tau}}=\sum_{i=0}^{t-\tau}\beta_i\,s_{t,\tau,i}.
\]
Substituting this identity into the closed-form cost yields the mixed–integer linear program stated next.
\begin{mdframed}[backgroundcolor=gray!10, roundcorner=10pt,
  innerleftmargin=4pt, innerrightmargin=4pt,
  innertopmargin=4pt, innerbottommargin=4pt]

\noindent\textbf{Final MILP Reformulation:}
\begin{align}
\min_{\theta,\,c,\,s} \quad &
\sum_{t=k}^{T-1} \sum_{\tau=k}^t \sum_{i=0}^{t-\tau} 
  \beta_i\, g_{t,\tau}\, s_{t,\tau,i}
  + \lambda \sum_{t=k}^{T-1} \theta_{t|k}, 
  \label{eq:milp-obj} \\[4pt]
\text{s.t.}\quad &
c_{t,k} = c_{t-1,k} + \theta_{t|k}, 
\qquad c_{k-1,k} = 0, \nonumber\\
& c_{t,\tau} = c_{t,k} - c_{\tau-1,k}, 
\qquad k \le \tau \le t \le T-1, \nonumber\\
& 0 \le c_{t,\tau} \le t - \tau + 1, \nonumber\\
& \sum_{i=0}^{t-\tau} s_{t,\tau,i} = 1, 
\qquad 
  \sum_{i=0}^{t-\tau} i\, s_{t,\tau,i} = c_{t,\tau}, \nonumber\\
& \theta_{t|k} \in \{0,1\}, 
\quad s_{t,\tau,i} \in \{0,1\}, 
\quad c_{t,\tau} \in \mathbb{Z}_{\ge 0}. \nonumber
\end{align}
\end{mdframed}

The MILP \eqref{eq:milp-obj} is to be solved at every time-step $k$ based on the realized error $e^s_k = x_k - (Ax_{k-1}+Bu_{k-1})$. 
However, we will soon show (Theorem~1) that there exists an ellipsoid $\mathcal{E}_k^{\rm skip}$ such that for all $e^s_k \in \mathcal{E}_k^{\rm skip}$ we can certify without solving the MILP that the optimal decision at time $k$ is to skip (i.e., $\theta_{k|k}^* =0$), thus saving some computational burden. 
Similarly, there exists another ellipsoid $\mathcal{E}_k^{\rm attempt} \supset \mathcal{E}_k^{\rm skip}$ such that for all $e^s_k \notin \mathcal{E}_k^{\rm attempt}$, the optimal strategy at that time is to attempt (i.e., $\theta_{k|k}^* =1$). 
We provide analytical characterization of these ellipsoids and use them as a one-step optimality certificate and solve the MILP only when $e^s_k \in \mathcal{E}_k^{\rm attempt} \setminus \mathcal{E}_k^{\rm skip}$. 

\begin{theorem}[One-step optimality certificates]
\label{thm:one-step-certificates-erasure}
Define
\[
W_k := \sum_{j=0}^{T-1-k} (A^j)^\top \Gamma_{k+j} A^j .
\]
Let $\mathrm{Ben}_k(e_k^s)$ denote the expected gain/loss from not attempting to communicate over making an attempt at time $k$. Then,
\begin{equation}
\label{eq:sandwich-erasure}
p\,e_k^{s\top}\Gamma_k e_k^s - \lambda
\;\le\;
\mathrm{Ben}_k(e_k^s)
\;\le\;
p\,e_k^{s\top} W_k e_k^s - \lambda .
\end{equation}
Consequently, if $p\,e_k^{s\top}\Gamma_k e_k^s \ge \lambda$ then \emph{attempt} is (weakly) optimal, and
if $p\,e_k^{s\top} W_k e_k^s \le \lambda$ then \emph{skip} is (weakly) optimal.
\end{theorem}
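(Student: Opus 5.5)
We read $\mathrm{Ben}_k(e_k^s)$ as $V_k^{0}(e_k^s)-V_k^{1}(e_k^s)$. Here $V_k^{b}$ is the optimal value of the unconstrained program \eqref{eq:unconstrained} with $\theta_{k|k}=b$ fixed, minimized over the tail $\theta_{k+1|k},\dots,\theta_{T-1|k}$. The plan is an exchange argument: take an optimal tail for one value of $\theta_{k|k}$, reuse the same tail with the other value, and compare the two costs through the closed form of Proposition~\ref{thm:closed-form}.

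\textbf{Only the $\tau=k$ terms depend on $\theta_{k|k}$.} In \eqref{eq:unconstrained}, the exponent $\sum_{s=\tau}^t\theta_{s|k}$ contains $\theta_{k|k}$ only when $\tau=k$. Write $c'_t:=\sum_{s=k+1}^t\theta_{s|k}$ for the tail count, with $c'_k=0$. For a fixed tail, the cost with $\theta_{k|k}=b$ is
\[
C_b=\sum_{t=k}^{T-1}(1-p)^{b+c'_t}g_{t,k}+R+\lambda b ,
\]
where $R$ collects the $\tau\ge k+1$ terms and the tail communication cost; $R$ does not depend on $b$. Hence
\[
C_0-C_1=p\sum_{t=k}^{T-1}(1-p)^{c'_t}g_{t,k}-\lambda .
\]
Each $g_{t,k}=\operatorname{tr}(\Gamma_t A^{t-k}e_k^s e_k^{s\top}(A^{t-k})^\top)=e_k^{s\top}(A^{t-k})^\top\Gamma_t A^{t-k}e_k^s\ge 0$, because $\Gamma_t=L_t^\top S_tL_t\succeq0$. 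Two consequences follow:
\[
\sum_{t=k}^{T-1}g_{t,k}=e_k^{s\top}W_ke_k^s,\qquad g_{k,k}=e_k^{s\top}\Gamma_ke_k^s .
\]

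\textbf{Lower bound.} Let the tail be optimal for $\theta_{k|k}=0$, so that $C_0=V_k^0$. Using the same tail with $\theta_{k|k}=1$ gives $V_k^1\le C_1$. Therefore
\[
\mathrm{Ben}_k\ge C_0-C_1=p\sum_{t=k}^{T-1}(1-p)^{c'_t}g_{t,k}-\lambda .
\]
All summands are nonnegative, so we drop every term except $t=k$. Since $c'_k=0$, that term equals $g_{k,k}$, and we obtain $\mathrm{Ben}_k\ge p\,e_k^{s\top}\Gamma_ke_k^s-\lambda$.

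\textbf{Upper bound.} Let the tail be optimal for $\theta_{k|k}=1$, so that $C_1=V_k^1$. Then $V_k^0\le C_0$, and therefore
\[
\mathrm{Ben}_k\le C_0-C_1=p\sum_{t=k}^{T-1}(1-p)^{c'_t}g_{t,k}-\lambda .
\]
Since $(1-p)^{c'_t}\le1$ and $g_{t,k}\ge0$, this is at most $p\,e_k^{s\top}W_ke_k^s-\lambda$.

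\textbf{Certificates.} If $p\,e_k^{s\top}\Gamma_ke_k^s\ge\lambda$, then $\mathrm{Ben}_k\ge0$, so attempting is weakly optimal. If $p\,e_k^{s\top}W_ke_k^s\le\lambda$, then $\mathrm{Ben}_k\le0$, so skipping is weakly optimal.

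\textbf{Main obstacle.} The delicate point is making precise that $\mathrm{Ben}_k$ compares \emph{optimized} continuations, not a fixed rollout. The exchange argument handles this by choosing a different reference tail for each inequality. The remaining check is that the exponents with $\tau\ge k+1$ are genuinely independent of $\theta_{k|k}$, which holds because their sums start at $s=\tau>k$.
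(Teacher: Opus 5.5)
Your proposal is correct and follows the paper's argument. The paper also brackets $\mathrm{Ben}_k$ by evaluating the optimal skip (resp.\ attempt) continuation under the opposite first action, and then bounds the tail survival factors between $0$ and $1$ using nonnegativity of the $\Gamma$-weighted terms. The only difference is that you read the tail difference off the closed form of Proposition~\ref{thm:closed-form}, where $(1-p)^{c'_t}$ is exactly the expectation of the paper's random indicator $\mu_{t,k+1}$. This makes the cancellation of the noise ($\tau\ge k+1$) terms explicit, whereas the paper's unrolling only asserts it.
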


\noindent\emph{Proof.} See Appendix~\ref{app:proof-one-step}. 

The `weakly' part in the theorem statement implies that when the inequalities are held with equalities (e.g., $p\,e_k^{s\top}\Gamma_k e_k^s = \lambda$) both $\theta_{k|k}^* = 0$ and  $\theta_{k|k}^* = 1$ result in the same expected performance, therefore skip (or equivalently, attempt) is weakly optimal over attempt (or equivalently, skip). 
Otherwise, when strict inequalities hold, one action (skip/attempt) is strictly optimal over another. 
From Theorem~\ref{thm:one-step-certificates-erasure} we obtain 
\begin{align*}
    &\mathcal{E}_k^{\rm skip} = \{e \in \R^n~|~ \|v\|^2_{\frac{p}{\lambda}W_k} \le 1\},\\
    &\mathcal{E}_k^{\rm attempt} = \{e \in \R^n~|~ \|v\|^2_{\frac{p}{\lambda}\Gamma_k} \le 1\}.
\end{align*}

As $p$ decreases or $\lambda$ increases, the skip set $\mathcal{E}_k^{\rm skip}$ increases, illustrating the fact that skip is optimal in a larger region. 
In the limit as $\tfrac{p}{\lambda} \to 0$, we have $\mathcal{E}_k^{\rm skip} \to \R^n$, indicating that transmission should never be attempted, as one would expect for the case of a highly unreliable channel ($p\to 0$) and highly costly communication ($\lambda \to \infty$). 
Analogous conclusions can be drawn for the case where $\tfrac{p}{\lambda} \to \infty$ (or, $\lambda \to 0$) where ``attempt all the time" becomes optimal, as one would expect for the case with no communication cost (i.e., $\lambda =0$).

Next, we establish schedule–dependent upper and lower ratio bounds for the optimal performance  relative to the ideal-channel benchmark (i.e., $p=1$), thereby providing explicit, computable guarantees on performance degradation as a result of channel quality $p$.

\begin{theorem}[Ratio bounds relative to the lossless optimum]
\label{thm:ratio-lossless}
For any channel probability $p\in (0,1)$, let $J_p(\Theta_k,k)$ be as in \eqref{eq:unconstrained}, and define
$J_p^{\star}(k):=\min_{\Theta_k} J_p(\Theta_k,k)$ and $J_1^{\star}(k):=\min_{\Theta_k} J_1(\Theta_k,k)$ as optimal performance in lossy and ideal channels, respectively, for the horizon $[k,T]$.
Fix any ideal channel optimizer optimizer $\Theta_1^{\star}\in\arg\min_{\Theta} J_1(\Theta_k,k)$ and any
$p$-optimizer $\Theta_p^{\star}\in\arg\min_{\Theta} J_p(\Theta_k,k)$.
Then,
\begin{equation}
\label{eq:ratio-tight}
\frac{J_p^{\star}(k)}{J_1^{\star}(k)}
\;\le\;
1 + (1-p)\,
\frac{\displaystyle
\sum_{t=k}^{T-1}\sum_{\tau=k}^{t}
\mathbf{1}\!\big\{c_{t,\tau}(\Theta^{\star}_{1})\ge 1\big\}\, g_{t,\tau}}
{\displaystyle J_1^{\star}(k)} .
\end{equation}
Moreover, letting $H:=T-k$ and
$C_{\max}:=\max_{k\le \tau\le t\le T-1} c_{t,\tau}(\Theta_p^{\star})\le H$, the following
\emph{lower} bound also holds:
\begin{equation}
\label{eq:ratio-lower-dependent}
1 +
(1-p)^{C_{\max}}
\tfrac{\displaystyle
\sum_{t=k}^{T-1}\sum_{\tau=k}^{t}
\mathbf{1}\!\big\{c_{t,\tau}(\Theta^{\star}_{p})\ge 1\big\}\, g_{t,\tau}}
{\displaystyle J_1(\Theta_p^{\star},k)}
\le
\frac{J_p^{\star}(k)}{J_1^{\star}(k)}  .
\end{equation}
\end{theorem}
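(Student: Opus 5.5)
The plan is to compare the lossy and lossless objectives schedule by schedule, using the closed form of Proposition~\ref{thm:closed-form}, and then pass to optimizers. The main tool is the exact decomposition
\[
J_p(\Theta_k,k)=J_1(\Theta_k,k)+\sum_{t=k}^{T-1}\sum_{\tau=k}^{t}\mathbf{1}\{c_{t,\tau}(\Theta_k)\ge 1\}\,(1-p)^{c_{t,\tau}(\Theta_k)}\,g_{t,\tau},
\]
valid for every schedule $\Theta_k$.

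\emph{Step 1: the decomposition.} Setting $p=1$ in \eqref{eq:unconstrained}, the factor $(1-p)^{c_{t,\tau}}$ becomes $0^{c_{t,\tau}}=\mathbf{1}\{c_{t,\tau}=0\}$, with the convention $0^0=1$. Hence
\[
J_1(\Theta_k,k)=\sum_{t,\tau}\mathbf{1}\{c_{t,\tau}=0\}\,g_{t,\tau}+\lambda\sum_t\theta_{t|k}.
\]
For general $p$, I split each term $(1-p)^{c_{t,\tau}}g_{t,\tau}$ according to whether $c_{t,\tau}=0$ or $c_{t,\tau}\ge 1$. This yields the display above. I will also record that $g_{t,\tau}=\operatorname{tr}(\Gamma_tG_{t,\tau})\ge 0$, because $\Gamma_t=L_t^\top S_tL_t\succeq 0$ and $G_{t,\tau}\succeq 0$.

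\emph{Step 2: upper bound.} By optimality of $\Theta_p^\star$, we have $J_p^\star(k)\le J_p(\Theta_1^\star,k)$. Apply the decomposition to $\Theta_1^\star$. On every term with $c_{t,\tau}\ge1$ we have $(1-p)^{c_{t,\tau}}\le 1-p$, since $0<1-p<1$, and $g_{t,\tau}\ge0$. This gives
\[
J_p^\star(k)\le J_1^\star(k)+(1-p)\sum_{t,\tau}\mathbf{1}\{c_{t,\tau}(\Theta_1^\star)\ge1\}\,g_{t,\tau}.
\]
Dividing by $J_1^\star(k)$ yields \eqref{eq:ratio-tight}.

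\emph{Step 3: lower bound.} Here I apply the decomposition to $\Theta_p^\star$ itself. Every count satisfies $c_{t,\tau}(\Theta_p^\star)\le C_{\max}$, so $(1-p)^{c_{t,\tau}}\ge(1-p)^{C_{\max}}$. Therefore
\[
J_p^\star(k)\ge J_1(\Theta_p^\star,k)+(1-p)^{C_{\max}}\sum_{t,\tau}\mathbf{1}\{c_{t,\tau}(\Theta_p^\star)\ge1\}\,g_{t,\tau}.
\]
Next, $J_1^\star(k)\le J_1(\Theta_p^\star,k)$ by optimality of $\Theta_1^\star$. Since the numerator is nonnegative, it follows that
\[
\frac{J_p^\star(k)}{J_1^\star(k)}\ge \frac{J_p^\star(k)}{J_1(\Theta_p^\star,k)}.
\]
Dividing the previous inequality by $J_1(\Theta_p^\star,k)$ then gives \eqref{eq:ratio-lower-dependent}.

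\emph{Main obstacle.} The algebra above is routine. The only delicate point is that the denominators must be strictly positive for the ratios to make sense. I would handle this by assuming $J_1^\star(k)>0$, which then also gives $J_1(\Theta_p^\star,k)\ge J_1^\star(k)>0$. This holds, for instance, whenever some $g_{t,\tau}>0$: an all-skip schedule then has positive cost, and any schedule with an attempt pays at least $\lambda>0$. The degenerate case in which all $g_{t,\tau}=0$ gives $J_p^\star=J_1^\star=0$, and I would exclude it explicitly.
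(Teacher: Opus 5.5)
Your proposal is correct and follows the paper's proof essentially step for step. You use the same schedule-wise identity $J_p(\Theta)-J_1(\Theta)=\sum_{t,\tau}\mathbf{1}\{c_{t,\tau}(\Theta)\ge 1\}(1-p)^{c_{t,\tau}(\Theta)}g_{t,\tau}$, evaluated at $\Theta_1^\star$ with $(1-p)^c\le 1-p$ for the upper bound, and at $\Theta_p^\star$ with $(1-p)^c\ge(1-p)^{C_{\max}}$ together with $J_1(\Theta_p^\star,k)\ge J_1^\star(k)$ for the lower bound; your explicit use of $g_{t,\tau}\ge 0$ and your observation that $J_1^\star(k)>0$ exactly when some $g_{t,\tau}>0$ make rigorous two points the paper leaves implicit.
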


\noindent\emph{Proof.} See Appendix~\ref{app:proof-ratio-lossless}.

As $p\to 1$, $\tfrac{J_p^{\star}}{J_1^{\star}} \to 1$, as expected. 
The upper bound \eqref{eq:ratio-tight}, which is more operationally significant, shows how performance degrades as $p \to 0$. 

The complete workflow, offline precomputations, interval–count encoding, MILP construction, and the receding–horizon loop, is summarized in Algorithm~\ref{alg:et-lqg}.
We conclude our analysis with two remarks highlighting the special cases: (i) when $p=1$ the MILP \eqref{eq:milp-obj} does not work since the relaxation $(1-p\theta) = (1-p)^\theta$ fails, and (ii) when successful and failed attempts have different communication costs.  

 \begin{remark}
When $0<p<1$, the survival factor $(1-p)^{c_{t,\tau}}$ depends on the attempt count
$c_{t,\tau}$, so the one–hot linearization in Section~\ref{sec:communication} is appropriate.
In contrast, when the channel is perfect ($p=1$), the survival factor becomes a Boolean
product of binary complements,
\[
\mu_{t,\tau}(\theta)
= \prod_{s=\tau}^{t}\bigl(1-\theta_{s\mid k}\bigr) \in \{0,1\},
\]
which equals $1$ if and only if no attempts are issued on $[\tau,t]$.
In this case, the nonlinearity collapses to products of binary variables and can be linearized
\emph{exactly} with fewer variables by introducing auxiliary binaries to represent each product.
Because Fortet–McCormick envelopes are exact for binary variables, each weighted term becomes linear,
and the objective reduces to a binary linear program with only $H=T-k$ scheduling binaries
$\{\theta_{t\mid k}\}_{t=k}^{T-1}$ and $\tfrac{H(H+1)}{2}$ auxiliary binaries (one per interval $[\tau,t]$).
A compact formulation for the case $p=1$, including equivalent cascaded McCormick constructions,
is given in~\cite{hashemi2025linear}. \hfill $\blacktriangle$
\end{remark}

\begin{remark}[Heterogeneous communication penalties]
\label{rem:heterogeneous}
In some applications, unsuccessful and successful transmission attempts may incur different costs.  
Let $\lambda_1>0$ denote the penalty for an attempted but \emph{unsuccessful} transmission, 
and $\lambda_2>0$ the penalty for an attempted and \emph{successful} one.  
The expected communication cost over the horizon $[k,T)$ is then
\[
C_{\mathrm{comm}}
= \mathbb{E}\!\left[\sum_{t=k}^{T-1}
\lambda_1\,\theta_{t|k}\,(1-\delta_t) \;+\; \lambda_2\,\theta_{t|k}\,\delta_t \right].
\label{eq:comm-split}
\]
Since $\Pr(\delta_t=1 \mid \theta_{t|k}=1)=p$, it follows that
\[
C_{\mathrm{comm}}
= \mathbb{E}\!\left[\sum_{t=k}^{T-1} \big(\lambda_1(1-p)+\lambda_2 p\big)\,\theta_{t|k}\right]
= \lambda_{\mathrm{eff}}\,\mathbb{E}\!\left[\sum_{t=k}^{T-1}\theta_{t|k}\right],
\label{eq:lambda-eff}
\]
where
\[
\lambda_{\mathrm{eff}} := \lambda_1(1-p)+\lambda_2 p.
\]
Thus, the heterogeneous-penalty model is \emph{equivalent} to the baseline formulation after replacing $\lambda$ with $\lambda_{\mathrm{eff}}$.  
All subsequent derivations—including the covariance recursion, the closed-form expression, and the MILP reformulation—remain valid under this generalization. \hfill $\blacktriangle$
\end{remark}

\begin{algorithm}
\caption{MPC Event–Triggered LQG via MILP}
\label{alg:et-lqg}
\begin{algorithmic}[1]
\small
\Require System $(A,B)$, weights $(Q,Q_T,R)$, noise covariance $\Sigma^w$, success prob.\ $p$, penalty $\lambda$, horizon $T$
\Ensure Schedule $\{\theta_k\}_{k=0}^{T-1}$ and controls $\{u_k\}_{k=0}^{T-1}$

\For{$k=T-1$ \textbf{down to} $0$}
  \State $S_k \gets R + B^\top P_{k+1} B$
  \State $L_k \gets S_k^{-1} B^\top P_{k+1} A$
  \State $P_k \gets A^\top P_{k+1} A + Q - A^\top P_{k+1} B S_k^{-1} B^\top P_{k+1} A$
  \State $\Gamma_k \gets L_k^\top S_k L_k$
\EndFor

\State Precompute $\beta_i:=(1-p)^i$,
       $G_{t,\tau}:=A^{t-\tau}\Sigma^w(A^{t-\tau})^\top$ and
       $g_{t,\tau}:=\operatorname{tr}(\Gamma_t G_{t,\tau})$
       for $k\le\tau\le t\le T-1$, $\tau\ge k{+}1$
\State Precompute tail Gramians
       \[
         W_t \gets \sum_{j=0}^{T-1-t} (A^j)^\top \Gamma_{t+j} A^j,\qquad t=0,\dots,T-1 .
       \]

\For{$k=0$ \textbf{to} $T-1$}
  \State $e_k^s \gets x_k - A\hat{x}_{k-1} - B u_{k-1}$ \Comment innovation
  \State decided $\gets$ \textbf{false}
  \If{$p\, e_k^{s\top}\Gamma_k e_k^s \ge \lambda$} \Comment attempt is optimal
     \State $\theta_k \gets 1$; \quad decided $\gets$ \textbf{true}
  \ElsIf{$p\, e_k^{s\top} W_k e_k^s \le \lambda$} \Comment skip is optimal
     \State $\theta_k \gets 0$; \quad decided $\gets$ \textbf{true}
  \EndIf
  \If{\textbf{not} decided}
     \State Solve MILP~\eqref{eq:milp-obj} on $t=k{:}T-1$ to obtain $\{\theta_t^\star\}$
     \State $\theta_k \gets \theta_k^\star$ \Comment implement first action
  \EndIf

  \If{$\theta_k = 1$}
    \State Transmit; receive ACK/NACK and observe $\delta_k$
  \Else
    \State $\delta_k \gets 0$
  \EndIf

  \If{$\delta_k = 1$}
    \State $\hat{x}_k \gets x_k$ \Comment fresh packet received
  \Else
    \State $\hat{x}_k \gets A\hat{x}_{k-1} + B u_{k-1}$ \Comment prediction step
  \EndIf
  \State $u_k \gets -L_k\,\hat{x}_k$
\EndFor 
\end{algorithmic} 
\end{algorithm}  

\section{Numerical Example}\label{sec:case study}
We evaluate the proposed approach on the linearized longitudinal dynamics of a Boeing~747 in steady, level flight 
(altitude \(40{,}000\,\mathrm{ft}\), speed \(774\,\mathrm{ft/s}\)) with a sampling period of \(1\,\mathrm{s}\); 
see~\cite{boyd2004convex} for modeling details. 
The resulting dynamics are:
\[
x_{k+1} = A x_k + B u_k + w_k,
\]
with
{\small 
\[
A =
\begin{bmatrix}
  0.99 &  ~~\,0.03 & -0.02 & -0.32\\
  0.01 &  ~~\,0.47 &  ~~\,4.70 &  ~~\,0.00\\
  0.02 & -0.06 &  ~~\,0.40 &  ~~\,0.00\\
  0.01 & -0.04 &  ~~\,0.72 &  ~~\,0.99
\end{bmatrix}\!\!,\quad B =
\begin{bmatrix}
  ~~\,0.01 & 0.99\\
 -3.44 & 1.66\\
 -0.83 & 0.44\\
 -0.47 & 0.25
\end{bmatrix}\!\!.
\]
}

The process noise is zero-mean Gaussian with covariance \(\Sigma^w = \sigma^2 I\) (\(\sigma > 0\)). 
The packet is successfully delivered with probability \(p = 0.7\). 
The communication penalty is set to \(\lambda = 100\). 
The LQG weights are chosen as \(Q = 5I\), \(R = I\), and \(Q_T = 5I\). 
Furthermore, \( \E[x_0] =  [0.5,\,0.5,\,0.5,\,0.5]^\top\) and the covariance of the initial state is \(\Sigma_0 =0.4I\). 
Simulations are performed over a horizon of \(T = 50\).

Figs.~\ref{fig:err-oneshot}–\ref{fig:err-mpc} show the state-estimation errors under the \textit{one-shot}, where \eqref{eq:milp-obj} is solved only once at time $0$, and the \textit{MPC scheduling} modes. 
Both runs use the same realization of the i.i.d.\ channel process and the same noise sample paths. 
The MPC scheduler maintains the estimation error closer to zero and reduces the large deviations observed under the one-shot schedule 
by triggering transmissions just before the predicted error growth.

The transmission patterns in Figs.~\ref{fig:sch-oneshot1}–\ref{fig:sch-mpc1} show that the MPC-based scheduler is both sparser and more selective. 
It concentrates transmission attempts after periods of error buildup and skips updates when the predicted error remains small. 
The overall transmission success ratios are comparable (one–shot: \(12/23\); MPC: \(8/15\)), indicating that the performance improvement stems from better timing of transmissions rather than from random channel outcomes.
% --- figure (2x2) ---
\begin{figure}[t]
\centering

\subfloat[One-shot: state error\label{fig:err-oneshot}]{
  \includegraphics[width=0.48\columnwidth,trim=200 325 190 350]{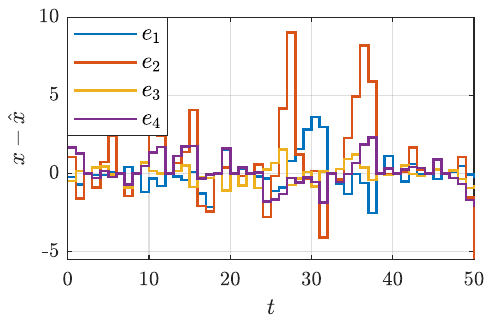}}
\hfill
\subfloat[MPC: state error\label{fig:err-mpc}]{
  \includegraphics[width=0.48\columnwidth,trim=200 325 190 350]{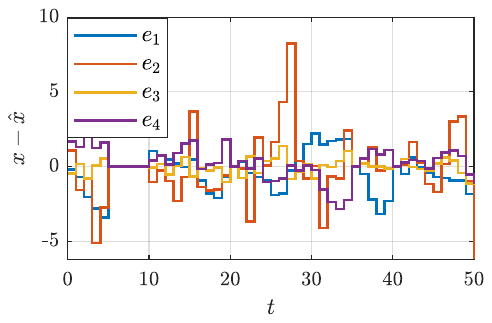}}
\par\medskip

\subfloat[One-shot: scheduling\label{fig:sch-oneshot1}]{
  \includegraphics[width=0.48\columnwidth,trim=200 325 190 350]{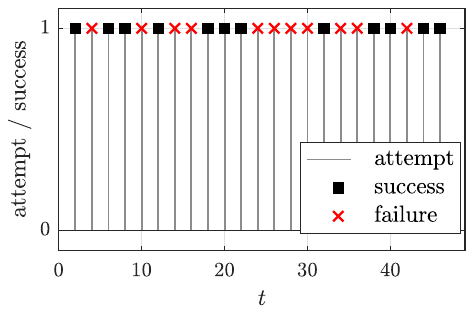}}
\hfill
\subfloat[MPC: scheduling\label{fig:sch-mpc1}]{
\includegraphics[width=0.48\columnwidth,trim=200 325 190 350]{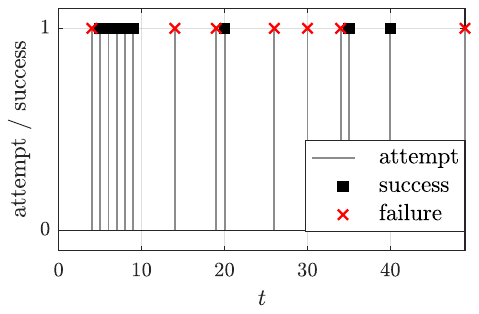}} % (filename as provided)
\caption{State-error predictions and scheduling decisions under one-shot and MPC strategies. }
\label{fig:four-up} \vspace{-.5 cm}
\end{figure}

Table~\ref{tab:comm-costs1} quantifies these effects. 
The MPC scheduler reduces the number of transmission attempts from \(23\) to \(15\) (\(-34.8\%\)), 
lowering the corresponding communication cost from \(2300\) to \(1500\). 
The total realized cost, which combines control performance and communication expenditure, 
decreases from \(7192.96\) to \(5635.19\); a \(21.7\%\) reduction. 
These results are consistent with the averaged Monte Carlo analysis over 100 random seeds, 
confirming the robustness of the observed performance trends (see Table~\ref{tab:comm-costs}).

Fig.~\ref{fig:attempts_over_time} shows the normalized fraction of communication attempts over time, averaged across 100 Monte Carlo trials. 
The blue and red bars correspond to the one-shot and MPC-based schedulers, respectively, and each bar height indicates the proportion of runs in which a transmission was attempted~at that time-step. 
The MPC scheduler triggers communication~less frequently and with greater temporal variation, reflecting its predictive and selective behavior. 
In contrast, the one-shot strategy attempts transmissions almost every step with an almost periodic structure, leading to higher overall communication activity. 
These temporal patterns confirm that the MPC approach achieves comparable performance with substantially fewer attempts, consistent with the aggregate statistics~reported in Table~\ref{tab:comm-costs}.

\begin{table}[h]
\centering
\setlength{\tabcolsep}{4pt}          
\renewcommand{\arraystretch}{1.05}
\footnotesize
\caption{Communication and realized costs.}
\label{tab:comm-costs1}
\begin{tabular}{
  l
  S[table-format=2.0,round-mode=places,round-precision=0] % integers
  S[table-format=4.0,round-mode=places,round-precision=0] % integers
  @{\hspace{0.8em}}                                       % extra gap
  S[table-format=4.2,round-mode=places,round-precision=2] % 2 decimals
  S[table-format=4.2,round-mode=places,round-precision=2]
}
\toprule
Method & {\makecell{Succ.\\attempts}} & {$\lambda\!\sum\!\theta$} & {LQG cost} & {LQG+comm} \\
\midrule
ONE-SHOT & 12 & 2300 & 4892.956 & 7192.956 \\
\bl{MPC}      &  \bl{8} & \bl{1500} & \bl{4135.19} & \bl{5635.19} \\
\bottomrule
\end{tabular}
\end{table} \vspace{-.1cm}
\begin{table}[!h]
\centering
\setlength{\tabcolsep}{4pt}          
\renewcommand{\arraystretch}{1.05}  
\footnotesize
\caption{Averaged performance over 100 random seeds}
\label{tab:comm-costs}
\begin{tabular}{
  l
  S[table-format=2.0,round-mode=places,round-precision=2] 
  S[table-format=4.0,round-mode=places,round-precision=0] 
  @{\hspace{0.8em}}                                       
  S[table-format=4.2,round-mode=places,round-precision=2] 
  S[table-format=4.2,round-mode=places,round-precision=2]
}
\toprule
{Method} & {\makecell{Succ.\\attempts}} & {$\lambda\!\sum\!\theta$} & {LQG cost} & {LQG+comm} \\
\midrule
ONE-SHOT & 16.07 & 2328 & 5044.98 & ~~7372.98 \\
\bl{MPC}      & ~~~\bl{10.19} & \bl{1469} & \bl{4340.34} & \bl{5809.30} \\
\bottomrule
\end{tabular}
\end{table}

\begin{figure}
  \centering
  \includegraphics[width=0.75\columnwidth,trim=160 320 170 335]{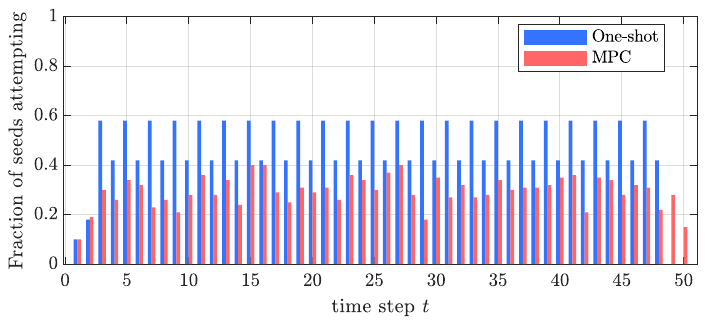}
  \caption{Normalized fraction of seeds attempting communication over time for one–shot and MPC strategies
  (\(p = 0.7\), \(\lambda = 100\)).}
  \label{fig:attempts_over_time}
\end{figure}

\begin{figure}[t]
\centering
\subfloat[Average number of attempts \label{fig:attempts_vs_p}]{
\includegraphics[width=0.48\columnwidth,trim=195 300 190 310,clip]{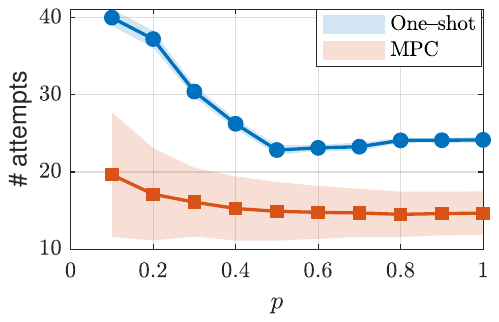}}
\hfill
\subfloat[Total realized cost \label{fig:total_cost_vs_p}]{
  \includegraphics[width=0.48\columnwidth,trim=195 300 190 310,clip]{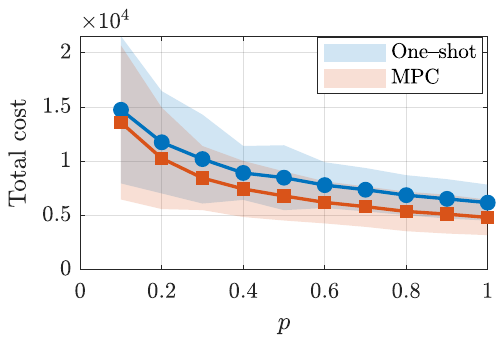}}
\caption{Comparison of one–shot and MPC schedulers versus channel success probability \(p\). Solid lines denote the mean; shaded bands show \(\pm 1\) standard deviation over 100 independent Monte Carlo runs.}
\label{fig:summary_vs_p}
\end{figure}

Fig.~\ref{fig:summary_vs_p} compares the one-shot and MPC-based schedulers as the channel success probability \(p\) varies.  
In Fig.~\ref{fig:attempts_vs_p}, the MPC scheduler consistently uses fewer transmission attempts than the one-shot schedule. In this figure, the one-shot scheduler’s attempts decrease with increasing packet success probability $p$ up to about $0.5$, then rises slightly. 
When $p$ is low, transmissions are frequent to compensate high loss.
As reliability improves, fewer transmissions are needed, but beyond $p \approx 0.5$ the scheduler becomes confident that attempts will succeed, leading to a slight increase in transmission activity.
This behavior indicates that MPC anticipates favorable transmission windows and avoids unnecessary communication when channel reliability is low. 
Correspondingly, Fig.~\ref{fig:total_cost_vs_p} shows that the total realized cost, which combines control performance and communication usage, 
also decreases as \(p\) increases. 
Across all probabilities, MPC achieves a lower overall cost, 
demonstrating the efficiency of predictive scheduling in balancing control performance and bandwidth utilization in networked control systems.

\section{CONCLUSIONS}\label{sec:conclusion}
We introduced a linear reformulation of event-triggered LQG over erasure channels by deriving a closed-form expression for the error covariance and recasting scheduling as a compact MILP. The approach accommodates heterogeneous communication penalties through a single effective charge and provides one-step send/skip certificates that obviate solving the MILP online; it also adds schedule-dependent upper and lower ratio bounds that define a performance envelope relative to a lossless channel. In a numerical case study, the MILP-based MPC policy achieved consistent reductions in both total cost and communication rate compared with a one-shot schedule across all packet-success probabilities examined. Future work will address output-only sensing with estimator co-design, multi-loop sharing, and infinite-horizon formulations.

\bibliographystyle{ieeetr}
\bibliography{References}

@inproceedings{molin2009lqg,
  title     = {On {LQG} Joint Optimal Scheduling and Control Under Communication Constraints},
  author    = {Molin, Adam and Hirche, Sandra},
  booktitle = {Proceedings of the 48th IEEE Conference on Decision and Control (CDC)},
  pages     = {5832--5838},
  year      = {2009},
  organization = {IEEE}
}

@article{molin2012optimality,
  title   = {On the Optimality of Certainty Equivalence for Event-Triggered Control Systems},
  author  = {Molin, Adam and Hirche, Sandra},
  journal = {IEEE Transactions on Automatic Control},
  volume  = {58},
  number  = {2},
  pages   = {470--474},
  year    = {2012},
  publisher = {IEEE}
}

@book{boyd2004convex,
  title     = {Convex Optimization},
  author    = {Boyd, Stephen P. and Vandenberghe, Lieven},
  year      = {2004},
  publisher = {Cambridge University Press}
}

@article{hashemi2025linear,
  title={A linear programming framework for optimal event-triggered LQG control},
  author={Hashemi, Zahra and Maity, Dipankar},
  journal={IEEE Control Systems Letters},
  volume={9},
  pages={2783--2788},
  year={2025},
  publisher={IEEE}
}

@article{soleymani2021value,
  title   = {Value of Information in Feedback Control: Quantification},
  author  = {Soleymani, Touraj and Baras, John S. and Hirche, Sandra},
  journal = {IEEE Transactions on Automatic Control},
  volume  = {67},
  number  = {7},
  pages   = {3730--3737},
  year    = {2021},
  publisher = {IEEE}
}

@phdthesis{thelander2020lqg,
  title  = {On {LQG}-Optimal Event-Based Sampling},
  author = {Thelander Andr{\'e}n, Marcus},
  school = {Lund University},
  year   = {2020}
}

@inproceedings{heemels2012introduction,
  title     = {An Introduction to Event-Triggered and Self-Triggered Control},
  author    = {Heemels, Wilhelmus P. M. H. and Johansson, Karl Henrik and Tabuada, Paulo},
  booktitle = {2012 IEEE 51st Conference on Decision and Control (CDC)},
  pages     = {3270--3285},
  year      = {2012},
  organization = {IEEE}
}

@article{kostina2019rate,
  title   = {Rate-Cost Tradeoffs in Control},
  author  = {Kostina, Victoria and Hassibi, Babak},
  journal = {IEEE Transactions on Automatic Control},
  volume  = {64},
  pages   = {4525--4540},
  year    = {2019},
  publisher = {IEEE}
}

@article{rabi2012adaptive,
  title   = {Adaptive Sampling for Linear State Estimation},
  author  = {Rabi, M. and Moustakides, G.~V. and Baras, J.~S.},
  journal = {SIAM Journal on Control and Optimization},
  volume  = {50},
  number  = {2},
  pages   = {672--702},
  year    = {2012}
}

@article{lipsa2011remote,
  title   = {Remote State Estimation with Communication Costs for First-Order {LTI} Systems},
  author  = {Lipsa, G.~M. and Martins, N.~C.},
  journal = {IEEE Transactions on Automatic Control},
  volume  = {56},
  number  = {9},
  pages   = {2013--2025},
  year    = {2011}
}

@article{mamduhi2025network,
  title   = {Network-Aware Optimal Sampling for Stochastic Control Systems over Dynamic Networks},
  author  = {Mamduhi, Mohammad H. and Maity, Dipankar},
  journal = {IEEE Control Systems Letters},
  volume  = {9},
  pages   = {1808--1813},
  year    = {2025},
  publisher = {IEEE}
}

@article{maity2019optimal,
  title   = {Optimal Event-Triggered Control of Nondeterministic Linear Systems},
  author  = {Maity, Dipankar and Baras, John S.},
  journal = {IEEE Transactions on Automatic Control},
  volume  = {65},
  number  = {2},
  pages   = {604--619},
  year    = {2019},
  publisher = {IEEE}
}

@article{maity2021multiagent,
  title   = {Multiagent Consensus Subject to Communication and Privacy Constraints},
  author  = {Maity, Dipankar and Tsiotras, Panagiotis},
  journal = {IEEE Transactions on Control of Network Systems},
  volume  = {9},
  number  = {2},
  pages   = {943--955},
  year    = {2021},
  publisher = {IEEE}
}

@article{maity2021optimalb,
  title   = {Optimal {LQG} Control of Networked Systems Under Traffic-Correlated Delay and Dropout},
  author  = {Maity, Dipankar and Mamduhi, Mohammad H. and Hirche, Sandra and Johansson, Karl H.},
  journal = {IEEE Control Systems Letters},
  volume  = {6},
  pages   = {1280--1285},
  year    = {2021},
  publisher = {IEEE}
}

@article{suthar2025fly,
  title={Where to Fly, What to Send: Communication-Aware Aerial Support for Ground Robots},
  author={Suthar, Harshil and Maity, Dipankar},
  journal={arXiv preprint arXiv:2512.06207},
  year={2025}
}
\appendices
\section{Proof of Theorem~\ref{thm:one-step-certificates-erasure}}
\label{app:proof-one-step}
\begin{proof}
For a decision at time $k$, define the cost-to-go under the two actions
\begin{align*}
J_k^{\text{skip}}(\Theta)
&:= \mathbb{E}\!\left[\sum_{t=k}^{T-1}\!\bigl(\|e_t\|_{\Gamma_t}^2+\lambda \theta_t\bigr)\,\middle|\,\theta_k=0,\ e_k^s\right],\\
J_k^{\text{att}}(\Theta)
&:= \mathbb{E}\!\left[\sum_{t=k}^{T-1}\!\bigl(\|e_t\|_{\Gamma_t}^2+\lambda \theta_t\bigr)\,\middle|\,\theta_k=1,\ e_k^s\right],
\end{align*}
and their optima
\[
J_k^{\text{skip},\star}:=\min_{\Theta}J_k^{\text{skip}}(\Theta),\qquad
J_k^{\text{att},\star}:=\min_{\Theta}J_k^{\text{att}}(\Theta).
\]
Set the (skip minus attempt) benefit
\[
\mathrm{Ben}_k(e_k^s):=J_k^{\text{skip},\star}-J_k^{\text{att},\star}.
\]
Bounding by evaluating at the opposite policies gives
\begin{equation}
\label{eq:Ben-brackets}
J_k^{\text{skip},\star}-J_k^{\text{att}}(\Theta^{\text{skip}})
\;\le\;
\mathrm{Ben}_k(e_k^s)
\;\le\;
J_k^{\text{skip}}(\Theta^{\text{att}})-J_k^{\text{att},\star},
\end{equation}
where {\small $\Theta^{\text{skip}}\in\arg\min_\Theta J_k^{\text{skip}}(\Theta)$} and
{\small $\Theta^{\text{att}}\in\arg\min_\Theta J_k^{\text{att}}(\Theta)$}.

At stage $k$, skipping incurs a cost/penalty $\|e_k^s\|_{\Gamma_k}^2$, whereas attempting yields
$(1-p)\|e_k^s\|_{\Gamma_k}^2+\lambda$ (success with probability $p$ resets the error to zero).
Thus the immediate expected gain from attempting is
\[
p\,e_k^{s\top}\Gamma_k e_k^s-\lambda.
\]

Unrolling the error for $j\ge 1$,
\[
e_{k+j}=\mu_{k+j,k+1}\,A^{j}e_k^s+\eta_{k+j},
\]
where $\tilde\theta_s:=\theta_s\delta_s$ denotes a successful reception, 
$\mu_{k+j,k+1}:=\prod_{s=k+1}^{k+j}(1-\tilde\theta_s)\in\{0,1\}$ indicates no success on $(k,k+j]$, 
and $\eta_{k+j}$ collects process-noise terms (independent of $e_k^s$ and zero-mean):
\[
\eta_{k+j} := \sum_{r=0}^{j-1}\mu_{k+j,k+1+r}\,A^{\,j-1-r} w_{k+r}.
\]
Taking expectations, the \emph{difference} of the tail costs depends only on the deterministic $e_k^s$ term and the indicators $\mu$. Evaluating the right-hand side of~\eqref{eq:Ben-brackets} at $\Theta^{\text{att}}$ yields
\begin{align*}
    J_k^{\text{skip}}(\Theta^{\text{att}})-J_k^{\text{att},\star}
= &\; p\,e_k^{s\top}\Gamma_k e_k^s-\lambda\\
  &\; + p\sum_{j=1}^{T-1-k}\mu_{k+j,k+1}^{\text{att}}\,
     \|e_k^s\|_{A^{j\!\top}\Gamma_{k+j}A^j}^2 .
\end{align*}
The worst (largest) tail arises if no success occurs after $k$, i.e.,
$\mu_{k+j,k+1}^{\text{att}}=1$ for all $j$, giving
\[
\mathrm{Ben}_k(e_k^s)\;\le\;
p\,e_k^{s\top}\,
\underbracket{\Bigl(\Gamma_k+\sum_{j=1}^{T-1-k}\!A^{j\!\top}\Gamma_{k+j}A^j\Bigr)}_{W_k}
\,e_k^s-\lambda.
\]

Similarly, evaluating the left-hand side of~\eqref{eq:Ben-brackets} at $\Theta^{\text{skip}}$ gives
\begin{align*}
    J_k^{\text{skip},\star}-J_k^{\text{att}}(\Theta^{\text{skip}})
= &\; p\,e_k^{s\top}\Gamma_k e_k^s-\lambda\\
  &\; + p\sum_{j=1}^{T-1-k}\mu_{k+j,k+1}^{\text{skip}}\,
     \|e_k^s\|_{A^{j\!\top}\Gamma_{k+j}A^j}^2 .
\end{align*}
The tightest (largest) lower bound occurs if an immediate success follows $k$, i.e.,
$\mu_{k+j,k+1}^{\text{skip}}=0$ for all $j$, which yields
\[
\mathrm{Ben}_k(e_k^s)\ge p\,e_k^{s\top}\Gamma_k e_k^s-\lambda.
\]
Combining the two bounds establishes~\eqref{eq:sandwich-erasure}.
\end{proof}

\section{Proof of Theorem~\ref{thm:ratio-lossless}}
\label{app:proof-ratio-lossless}
\begin{proof}
Fix a schedule $\Theta$. Recall
\begin{align}
\label{eq:Jp-def}
J_p(\Theta)
&= \lambda \sum_{t=k}^{T-1} \theta_{t|k}
+ \sum_{t=k}^{T-1}\sum_{\tau=k}^{t} \mathbf{1}\!\{c_{t,\tau}(\Theta)=0\}\, g_{t,\tau} \nonumber\\
&\quad+ \sum_{t=k}^{T-1}\sum_{\tau=k}^{t} \mathbf{1}\!\{c_{t,\tau}(\Theta)\neq 0\}\,(1-p)^{c_{t,\tau}(\Theta)}\, g_{t,\tau},
\end{align}
and the lossless ($p=1$) cost (cf.~\cite{hashemi2025linear})
\begin{align}
\label{eq:J1-def}
J_1(\Theta)
&= \sum_{t=k}^{T-1}\sum_{\tau=k}^{t} \mu_{t,\tau}\, g_{t,\tau} \;+\; \lambda \sum_{t=k}^{T-1} \theta_{t|k} \nonumber\\
&= \sum_{t=k}^{T-1}\sum_{\tau=k}^{t} \mathbf{1}\!\{c_{t,\tau}(\Theta)=0\}\, g_{t,\tau}
\;+\; \lambda \sum_{t=k}^{T-1} \theta_{t|k} .
\end{align}
Subtracting \eqref{eq:J1-def} from \eqref{eq:Jp-def} gives
\begin{equation}
\label{eq:Jdiff-app}
J_p(\Theta) - J_1(\Theta)
= \sum_{t=k}^{T-1}\sum_{\tau=k}^{t}
\mathbf{1}\!\big\{c_{t,\tau}(\Theta)\neq 0\big\}\,
(1-p)^{\,c_{t,\tau}(\Theta)}\, g_{t,\tau}\;,
\end{equation}
because each $g_{t,\tau}\ge 0$ and $(1-p)^{c}\ge 0$.
% Taking minima over $\Theta$ on both sides of \eqref{eq:Jdiff-app} yields
% $J_p^{\star}\ge J_1^{\star}$.

\emph{Upper bound:}
Optimality implies $J_p^{\star}\le J_p(\Theta_1^{\star})$.
Using \eqref{eq:Jdiff-app} at $\Theta=\Theta_1^{\star}$ and the inequality
$(1-p)^{c}\le (1-p)$ for all $c\ge 1$,
\[
J_p(\Theta_1^{\star})-J_1^{\star}
\le (1-p)\sum_{t=k}^{T-1}\sum_{\tau=k}^{t}
\mathbf{1}\!\big\{c_{t,\tau}(\Theta^{\star}_{1})\ge 1\big\}\, g_{t,\tau}.
\]
Divide by $J_1^{\star}$ to obtain \eqref{eq:ratio-tight}.

\emph{Lower bound:}
Evaluate \eqref{eq:Jdiff-app} at $\Theta_p^{\star}$:
\[
J_p^{\star}-J_1(\Theta_p^{\star})
= \sum_{t=k}^{T-1}\sum_{\tau=k}^{t}
\mathbf{1}\!\big\{c_{t,\tau}(\Theta^{\star}_{p})\neq 0\big\}\,
(1-p)^{\,c_{t,\tau}(\Theta^{\star}_{p})}\, g_{t,\tau}.
\]
Because $0\le c_{t,\tau}(\Theta^{\star}_{p})\le t-\tau+1\le H$,
we have $(1-p)^{\,c_{t,\tau}(\Theta^{\star}_{p})}\ge $ $ (1-p)^{C_{\max}}$ with
$C_{\max}:=\max_{t,\tau}c_{t,\tau}(\Theta^{\star}_{p})$.
Thus,
\[
J_p^{\star}-J_1(\Theta_p^{\star})
\;\ge\;
(1-p)^{C_{\max}}
\sum_{t=k}^{T-1}\sum_{\tau=k}^{t}
\mathbf{1}\!\big\{c_{t,\tau}(\Theta^{\star}_{p})\ge 1\big\}\, g_{t,\tau}.
\]
Divide by $J_1(\Theta_p^{\star})$ and use $J_1(\Theta_p^{\star})\ge J_1^{\star}$
to get
\[
\frac{J_p^{\star}}{J_1^{\star}}
\;\ge\;
\frac{J_p^{\star}}{J_1(\Theta_p^{\star})}
\;\ge\;
1 +
(1-p)^{C_{\max}}
\frac{\displaystyle \sum_{t=k}^{T-1}\sum_{\tau=k}^{t}
\mathbf{1}\!\big\{c_{t,\tau}(\Theta^{\star}_{p})\ge 1\big\}\, g_{t,\tau}}
{\displaystyle J_1(\Theta_p^{\star})},
\]
which is exactly \eqref{eq:ratio-lower-dependent}.
\end{proof}

\end{document}